\documentclass[12pt,a4paper]{article}

\usepackage{amsmath,amssymb,amsthm}
\usepackage{mathrsfs}
\usepackage{graphicx}
\usepackage{hyperref}
\usepackage{tikz}
\usetikzlibrary{arrows.meta,decorations.pathmorphing,decorations.markings,
                decorations.pathreplacing,positioning,shapes.geometric,calc,patterns}
\tikzset{->-/.style={decoration={markings,mark=at position #1 with {\arrow{Stealth}}},postaction={decorate}}}
\usepackage{xcolor}
\usepackage{physics}
\usepackage{slashed}
\usepackage{mathtools}
\usepackage{cite}
\usepackage{geometry}
\newcommand{\1}{\mathbf{1}}

\renewcommand{\Tr}{\mathrm{Tr}}
\newcommand{\im}{\mathrm{Im}}

\newcommand{\soft}{\mathrm{soft}}
\newcommand{\hard}{\mathrm{hard}}
\newcommand{\tot}{\mathrm{tot}}

\newcommand{\incl}{\mathrm{incl}}
\newcommand{\virt}{\mathrm{virt}}
\newcommand{\realp}{\mathrm{real}}

\theoremstyle{plain}
\newtheorem{proposition}{Proposition}
\theoremstyle{remark}

\definecolor{forwardcol}{RGB}{0,100,200}
\definecolor{backwardcol}{RGB}{200,50,0}
\definecolor{cutcol}{RGB}{0,160,60}
\definecolor{photoncol}{RGB}{150,0,200}

\title{\textbf{Soft QED as Open Quantum System:\\
Infrared Cancellation and Soft-Shell Coarse Graining}}
\author{\bf Soo-Jong Rey\\
Theory Group, Kwangwoon University, Seoul, Korea}
\date{}

\begin{document}
\maketitle

\begin{abstract}
I formulate the unresolved soft-photon sector of QED as an open quantum system. The resolved charged particles and hard photons form the system, photons below an energy resolution $\Lambda$ form the environment, and the basic object is the reduced density matrix.  A resolved outcome $f$, whose projection $\Pi_f$ may contain any finite collection of hard particles and photons, has probability
$P_{f}(i) =\sum_n|\langle f,n|S|i,0\rangle|^2=\langle i|F_f|i\rangle$, with Kraus operators $K_n={}_\soft\!\langle n|S|0\rangle_\soft$ and effect $F_f=\sum_nK_n^\dagger\Pi_fK_n$.  The Schwinger--Keldysh closed-time-path formulation places virtual and unresolved-real terms in one doubled-contour expansion. At one loop they carry the same on-shell eikonal kernel with opposite signs. This elegantly organizes the QED probability: for the same observable, perturbative order, diagrams, and phase space, the open quantum system calculation gives the same infrared-finite terms as the conventional full-QED calculation.  In the leading-soft eikonal sector, the soft-photon evolution is a unitary coherent-state displacement driven by the scattering current.  The equal-history identity of soft-photon influence functional $\mathcal F[J,J]=1$ exactly normalizes this soft evolution; together with the soft-photon theorem it removes the infrared-regulator-dependent leading-soft factor from each fixed-outcome inclusive probability.  Its $O(\alpha^2)$ expansion gives a compact check of that normalization.  I also derive an explicit leading-soft QED realization of scale-parametrized Lindblad evolution on a fixed hard-branch space. Tracing an infinitesimal soft-photon shell produces diagonal jump operators whose entries are fixed by the corresponding eikonal emission amplitudes.  In the closed-time-path representation, cross-branch Wightman contractions generate the jump term $L\rho L^\dagger$, while same-branch contractions generate the virtual subtraction $-\tfrac12\{L^\dagger L,\rho\}$ and the branch phase.  The finite-shell map is a completely positive unital Schur channel and, in the sharp scale-invariant leading-soft regime, a dephasing semigroup in $\log\Lambda$; more generally it is a completely-positive-divisible scale flow.  The resulting logarithmic visibility slope and monotonic purity loss are off-diagonal predictions of the reduced-state description.  The same controlled-displacement dilation gives the Sudakov no-emission probability, Poisson soft-photon multiplicities, and the $1/\omega$ bremsstrahlung number spectrum.
\end{abstract}

\section{Introduction}
\label{sec:intro}

Exclusive quantum electrodynamics (QED) amplitudes between Fock states contain infrared divergences from photons of arbitrarily small energy.  The no-soft-photon-emission probability for a charged particle changing its asymptotic velocity is driven to zero as the infrared regulator is removed. Measurements are nevertheless finite as they do not resolve arbitrarily soft radiation: the experimentally relevant quantity is an inclusive probability in which all unresolved photon states are summed. In this Bloch--Nordsieck (BN) approach the finiteness manifests only after virtual-loop and real-emission probabilities are evaluated separately and then combined~\cite{Bloch:1937pw,Yennie:1961ad}.

In this paper, I reformulate that calculation around the observables that are actually measured.  Photons below a resolution scale $\Lambda$ are treated as an environment and traced out.  The resolved charged particles, together with photons above $\Lambda$, form the hard system. The resulting reduced density matrix simultaneously contains inclusive probabilities on its diagonal and coherence between different hard outcomes off the diagonal.  This is the natural open-quantum-system (OQS)~\cite{OQS:2002BP} description of the soft sector of QED.  It is not a different dynamics or an alternative to QED: the partial trace is an exact rewriting of the same QED probability.  At a fixed perturbative order, with the same measured observable, diagrams, kinematics, and phase-space prescription, the OQS in Schwinger-Keldysh (SK)~\cite{Schwinger:1960qe,Keldysh:1964ud}, or closed-time-path (CTP)~\cite{Chou:1984es}, expansion reproduces both the infrared cancellation and the same surviving finite terms as the conventional full-QED calculation.

For a general finite-time process the combined hard and soft sectors evolve unitarily with $U(t_f,t_i)$.  For asymptotic scattering I use the interaction-picture limit $U_I(+\infty,-\infty)=S$.  Starting from a factorized preparation
$\rho^{\rm in}_{\tot}=\rho^{\rm in}_{\hard}\otimes|0\rangle\langle0|_\soft$, the reduced outgoing state is
\begin{equation}
 \rho^{\rm out}_{\hard}(\Lambda)
 =\Tr_\soft\!\left[S_\lambda\rho^{\rm in}_{\tot}S_\lambda^\dagger\right].
 \label{eq:reduced_intro}
\end{equation}
Here $\lambda$ is an auxiliary infrared regulator, while $\Lambda$ defines the resolved energy scale.  Physical fixed-resolution observables may depend on $\Lambda$ but must have a finite, regulator-independent limit as $\lambda$ is sent to zero.

Let $\Pi_f$ project onto a resolved hard measurement bin.  The label $f$ may denote any finite multiparticle hard configuration, including resolved photons; it is not restricted to a one-particle state.  For an incoming pure hard state $|i\rangle$,
\begin{equation}
 P_{f} (i; \Lambda)
 =\Tr_\hard\!\left(\Pi_f\rho^{\rm out}_{\hard}\right)
 =\sum_n\left|\langle f,n|S_\lambda|i,0\rangle\right|^2 .
 \label{eq:pf_intro}
\end{equation}
The sum runs over all unresolved soft-photon configurations.  For continuum scattering, $\Pi_f$ is understood as a finite kinematic bin or a wave-packet projector; the displayed amplitude sum then includes the corresponding sum or integral over hard states inside that bin.  The rank-one form is only a discrete shorthand.  The corresponding quantity is a differential probability or rate density, with the usual flux and phase-space factors supplied for a cross section.

Resolving the partial trace defines operators on the hard Hilbert space,
\begin{equation}
 K_n\equiv {}_\soft\!\langle n|S_\lambda|0\rangle_\soft,
 \qquad
 \rho^{\rm out}_{\hard}=\sum_nK_n\, \rho^{\rm in}_{\hard}\, K_n^\dagger .
 \label{eq:kraus_intro}
\end{equation}
For the outcome $f$, I define 
\begin{equation}
 F_f\equiv\sum_nK_n^\dagger\Pi_fK_n,
 \quad \mbox{such that} \quad P_{f} (i; \Lambda)=\langle i|F_f|i\rangle .
 \label{eq:povm_intro}
\end{equation}
The $F_f$ form the positive-operator-valued measure (POVM) induced on the incoming hard state.  Unitarity gives
\begin{equation}
 \sum_nK_n^\dagger K_n=\mathbb{I}_\hard,
 \qquad
 \sum_fF_f=\mathbb{I}_\hard,
 \qquad
 \sum_f P_{f} (i; \Lambda)=1 .
 \label{eq:global_norm_intro}
\end{equation}
This is the global normalization, not the Bloch--Nordsieck theorem.  The nontrivial statement is that each fixed-outcome probability in Eq.~\eqref{eq:pf_intro} has a finite $\lambda\to0$ limit.  That statement uses the soft-photon theorem~\cite{Weinberg:1965nx}: for a specified hard transition $i\to f$, the same eikonal current controls the infrared part of the virtual correction and the amplitudes for unresolved real emission.

The density-matrix structure requires an amplitude and its conjugate at the same time.  It is therefore naturally computed on a Schwinger--Keldysh (SK), or closed-time-path(CTP), contour~\cite{Schwinger:1960qe,Keldysh:1964ud,Chou:1984es}.  The photon propagator becomes
\begin{equation}
 \mathbf D=
 \begin{pmatrix}
 D_{++}&D_{+-}\\
 D_{-+}&D_{--}
 \end{pmatrix},
 \label{eq:ctp_matrix_intro}
\end{equation}
where same-branch lines generate virtual terms and cross-branch lines generate the cut, unresolved-real terms.  The four components satisfy
\begin{equation}
 D_{++}+D_{--}=D_{+-}+D_{-+}.
 \label{eq:ctp_identity_intro}
\end{equation}
At one loop in the eikonal approximation this identity places the virtual and real infrared kernels in a single expression.  At all orders in the abelian leading-soft eikonal sector the stronger statement is the equal-history normalization of the soft-photon's Feynman-Vernon influence functional, $\mathcal F[J,J]=1$.

The reduced-density-matrix viewpoint has been developed in studies of infrared quantum information, soft-radiation decoherence, and hard--soft entanglement~\cite{Carney:2017oxp,Gomez:2018,Semenoff:2019,Tomaras:2020bef,Irakleous:2021lgj}.  The role of SK observables in scattering on a deformed time contour has also been emphasized in recent works on asymptotic measurements~\cite{CaronHuot:2023vxl,CaronHuot:2025smatrix,Haehl:2024smatrix}.  On scale-parametrized Lindblad evolution, momentum-ordered soft evolution has been related to Lindblad/Banfi-Marchesini-Smye (BMS) dynamics~\cite{Neill:2018soft}, rapidity evolution of Color Glass Condensate density matrices has been cast in Lindblad form~\cite{Armesto:2019cgc,Li:2020jimwlk}, and renormalization-group evolution has been formulated as a quantum channel for spin decoherence induced by final-state radiation~\cite{Gu:2025rg}.  Detector resolution has also been interpreted as a coarse-graining scale of a reduced hard density matrix~\cite{Fukuyama:2026resolution}.  The specific result developed below is the microscopic leading-soft QED realization on a fixed hard-branch space: the shell jump operators are determined explicitly by the branch eikonal currents, and their Lindblad form is derived from the same coherent-state/CTP virtual--real structure that implements BN cancellation. My approach firmly puts resolution-dependent decoherence and scale-parametrized Lindblad evolution to the OQS foundation
. 

The paper is organized as follows.  Section~\ref{sec:setup} constructs the reduced hard channel, its Kraus operators, and the fixed-outcome quantum instrument.  Section~\ref{sec:ctp} gives the CTP formulation and the virtual/real dictionary.  Section~\ref{sec:oneloop} performs the one-loop cancellation and relates its common spectral kernel to the vacuum fluctuation--dissipation relation.  Section~\ref{sec:two_loop} gives the exact equal-history normalization of the leading-soft displacement channel and its compact $O(\alpha^2)$ check.  Section~\ref{sec:fgr} connects the construction with the Fermi Golden Rule and the optical theorem.  Section~\ref{sec:resolution_flow} constructs the fixed-branch soft-shell channel and derives its visibility and purity consequences.  Section~\ref{sec:coherent} collects the Sudakov, counting-statistics, spectrum, and energy consequences.  Section~\ref{sec:conclusions} concludes. Appendix A relates CTP propagator identities to vacuum fluctuation-dissipation theorem (FDT), B recalls soft exponent integral, C derives scale-local Lindblad generator, and D computes finite parts of CTP one-loop contributions.  

\section{Reduced Hard State and Quantum Instrument}
\label{sec:setup}

Consider a hard scattering process with characteristic scale $Q$ and massive charged external particles.  Choose $\lambda\ll\Lambda\ll Q$.  Photon energies and the sharp threshold $\Lambda$ are defined in a fixed detector frame.  The physical transverse photon modes are partitioned by energy,
\begin{equation}
 \mathcal H_{\rm photon}
 =\mathcal H_{\soft}(<\Lambda)\otimes\mathcal H_{\rm photon}(>\Lambda),
 \qquad
 \mathcal H_\soft(<\Lambda)=\mathcal F\{a^\dagger_{\mathbf k r}:|\mathbf k|<\Lambda\}.
 \label{eq:hilbert_split}
\end{equation}
I use covariant propagators in intermediate calculations and the conserved eikonal current to project onto physical polarizations.  The hard Hilbert space contains the charged particles and resolved photons.  Equation~\eqref{eq:hilbert_split} is the regulated perturbative Fock-space factorization of asymptotic transverse radiation modes used in inclusive calculations; it is not asserted to be a nonperturbative tensor factorization of the gauge-invariant charged Hilbert space.  Within this operational construction there is no inconsistency with Gauss' law: the unphysical polarizations decouple by current conservation, and the regulator is removed only after inclusive quantities are formed.  Changing $\Lambda$ changes which photon modes are discarded; it does not change the requirement that $\lambda$ disappear from physical fixed-resolution probabilities.

For the factorized preparation used throughout, Eq.~\eqref{eq:reduced_intro} is a completely positive trace-preserving (CPTP) map~\cite{Kraus:1983book,Nielsen:2000book}.  Its Kraus form is
\begin{equation}
 \mathcal E_{\Lambda,\lambda}[\rho]
 =\sum_{\mathbf n}K_{\mathbf n}\rho K_{\mathbf n}^\dagger,
 \qquad
 K_{\mathbf n}={}_\soft\!\langle\mathbf n|S_\lambda|0\rangle_\soft .
 \label{eq:kraus}
\end{equation}
The index $\mathbf n$ denotes a complete soft Fock configuration, not merely the total photon number.  Matrix elements are ordinary QED amplitudes,
\begin{equation}
 \langle f|K_{\mathbf n}|i\rangle
 =\langle f,\mathbf n|S_\lambda|i,0\rangle .
 \label{eq:kraus_matrix}
\end{equation}
Soft completeness and $S_\lambda^\dagger S_\lambda=\1$ give
\begin{equation}
 \sum_{\mathbf n}K_{\mathbf n}^\dagger K_{\mathbf n}=\1_\hard .
 \label{eq:kraus_complete}
\end{equation}
When contracted between hard states, the hard-output sum is implicit:
\begin{equation}
 \left\langle i'\left|\sum_{\mathbf n}K_{\mathbf n}^\dagger K_{\mathbf n}\right|i\right\rangle
 =\sum_{f,\mathbf n}\langle i',0|S_\lambda^\dagger|f,\mathbf n\rangle
 \langle f,\mathbf n|S_\lambda|i,0\rangle.
 \label{eq:kraus_complete_matrix}
\end{equation}

For a complete projective hard measurement $\{\Pi_f\}$, the selective measurement operation is
\begin{equation}
 \mathcal I_f[\rho]
 =\sum_{\mathbf n}\Pi_fK_{\mathbf n}\, \rho \, K_{\mathbf n}^\dagger\Pi_f.
 \label{eq:instrument}
\end{equation}
Its trace is the outcome probability,
\begin{equation}
 P_f(\rho)=\Tr\mathcal I_f[\rho]=\Tr(\rho F_f),
 \qquad
 F_f=\sum_{\mathbf n}K_{\mathbf n}^\dagger\Pi_fK_{\mathbf n}.
 \label{eq:povm}
\end{equation}
The unmeasured reduced channel is $\mathcal E_{\Lambda,\lambda}$.  If the projective outcome is measured but then ignored, the nonselective operation is instead
\begin{equation}
 \sum_f\mathcal I_f
 =\mathcal D_\Pi\circ\mathcal E_{\Lambda,\lambda},
 \qquad
 \mathcal D_\Pi[X]=\sum_f\Pi_fX\Pi_f,
 \label{eq:instrument_nonselective}
\end{equation}
which includes the dephasing caused by the measurement.  Thus $F_f$ acts on the incoming hard Hilbert space, whereas $\Pi_f$ selects an outgoing hard bin.  The distinction matters: $\sum_fF_f=\mathbb{I}$ is the full completeness, whereas infrared finiteness is a statement about each individual $P_f$.

For a fixed hard history $i\to f$, where $f$ may be a finite multiparticle configuration represented by a narrow hard bin, the soft-photon theorem~\cite{Weinberg:1965nx} gives, at leading order in every unresolved photon momentum,
\begin{equation}
 \langle f,\mathbf n|S_\lambda|i,0\rangle
 =\mathcal M^{\hard}_{fi}(\Lambda)\,s_{\mathbf n}[J_{fi};\lambda,\Lambda]
 +O(\omega/Q).
 \label{eq:soft_theorem}
\end{equation}
For charged external legs, with $Q_a$ measured in units of $e$,
\begin{equation}
 J_{fi}^{\mu}(k)=\sum_{a\in i,f}\eta_aQ_a\frac{p_a^\mu}{p_a\cdot k},
 \qquad
 \eta_a=\begin{cases}-1,&a\ \text{incoming},\\+1,&a\ \text{outgoing},\end{cases}
 \label{eq:eikonal_current}
\end{equation}
with charge conservation implying $k_\mu J_{fi}^{\mu}=0$.  For finite $\lambda$ and $\Lambda$, this current drives the free soft field into a multimode coherent state~\cite{Kibble:1968eqm}.  The soft amplitudes are matrix elements of a unitary displacement,
\begin{equation}
 s_{\mathbf n}[J_{fi}]=\langle\mathbf n|D(\alpha_{fi})|0\rangle,
 \qquad
 \sum_{\mathbf n}|s_{\mathbf n}[J_{fi}]|^2=1.
 \label{eq:soft_unitarity}
\end{equation}
Consequently the singular leading-soft factor cancels inside the fixed-outcome probability. This BN conclusion does not follow from the OQS definitions alone: it follows from the OQS/Kraus organization together with the soft-photon theorem and the unitary soft-photon displacement generated by the same fixed-hard-history on the two branches.  The soft-photon theorem is used here only to isolate the universal leading-soft terms.  Finite radiative terms, recoil effects, bin dependence, and subleading-soft contributions are not evaluated in Eq.~\eqref{eq:soft_theorem}; they are not altered by the OQS formulation.  When those terms are retained with the same diagrams, measurement operator, exact kinematics, and phase space, the OQS/CTP calculation gives exactly the same finite result as the conventional full-QED calculation.

\section{Closed-Time-Path Formulation}
\label{sec:ctp}

The full CTP generating functional is
\begin{equation}
 Z_{\rm full}[J_+,J_-]
 =\Tr\!\left[U[J_+]\rho^{\rm in}_{\tot}U^\dagger[J_-]\right],
 \label{eq:Zfull}
\end{equation}
while insertion of a hard projector defines
\begin{equation}
 Z_f[J_+,J_-]
 =\Tr\!\left[(\Pi_f\otimes\mathbb{I}_\soft)U[J_+]\rho^{\rm in}_{\tot}U^\dagger[J_-]\right].
 \label{eq:Zf}
\end{equation}
Equal sources give different quantities:
\begin{equation}
 Z_{\rm full}[J,J]=1,
 \qquad
 Z_f[J,J]=P_f.
 \label{eq:equal_source_distinction}
\end{equation}
After the hard history $i\to f$ has been selected by the soft theorem, the remaining soft influence functional is
\begin{equation}
 \mathcal F[J_+,J_-]
 =\Tr_\soft\!\left(U_\soft[J_+]\rho_\soft U_\soft^\dagger[J_-]\right),
 \qquad
 \mathcal F[J,J]=1.
 \label{eq:influence_def}
\end{equation}
The last identity is branchwise soft unitarity.  It should not be confused with the full trace identity or with the projected probability.

For the free photon field in the vacuum and the linearly coupled eikonal current, the Feynman--Vernon influence functional can be evaluated exactly~\cite{Feynman:1963fq}.  Introduce
$J_c=(J_++J_-)/2$ and $J_\Delta=(J_+-J_-)$.  The result can be written as 
\begin{equation}
 \mathcal F[J_+,J_-]
 =\exp\!\left[
   i e^2 J_\Delta\!\cdot D_R\!\cdot J_c
   -\frac{e^2}{2}J_\Delta\!\cdot D_H\!\cdot J_\Delta
 \right],
 \label{eq:IF_keldysh}
\end{equation}
where $D_R$ is the retarded propagator and
$D_H^{\mu\nu}(x-y)=\frac12\langle\{A^\mu(x),A^\nu(y)\}\rangle$ is the Hadamard correlator.  The retarded term carries the phase and causal response; the positive Hadamard term suppresses interference between unequal histories.  Equation~\eqref{eq:IF_keldysh} makes $\mathcal F[J,J]=1$ immediate because $J_\Delta=0$.

In the original $+/-$ basis the propagator is
\begin{equation}
 \mathbf D=
 \begin{pmatrix}
 D_{++}&D_{+-}\\
 D_{-+}&D_{--}
 \end{pmatrix},
 \quad
 \begin{aligned}
 D_{++}&=\langle TAA\rangle,&D_{--}&=\langle\bar TAA\rangle,\\
 D_{+-}&=\langle A_-A_+\rangle,&D_{-+}&=\langle A_+A_-\rangle.
 \end{aligned}
 \label{eq:ctp_components}
\end{equation}
Same-branch contractions are the virtual terms in the amplitude or its conjugate; cross-branch contractions are cut lines and therefore unresolved real radiation.

\begin{figure}[t]
\centering
\resizebox{0.82\textwidth}{!}{%
\begin{tikzpicture}[>=Stealth,thick]
  \draw[forwardcol,very thick,->-=0.28,->-=0.78] (-7,0)--(6,0);
  \node[forwardcol,above=3pt] at (-6.5,0) {$+$};
  \draw[black!55,very thick] (6,0)--(6,5);
  \node[fill=white,inner sep=2pt] at (6,2.5) {$\Pi_f\otimes\1_\soft$};
  \draw[backwardcol,very thick,->-=0.22,->-=0.72] (6,5)--(-7,5);
  \node[backwardcol,below=3pt] at (-6.5,5) {$-$};
  \draw[->,thick,gray] (-7.3,-0.6)--(6.8,-0.6) node[right,gray] {$t$};
  \draw[forwardcol,very thick,double,double distance=2.5pt] (-6.5,0)--(-1.0,0);
  \draw[forwardcol,very thick,double,double distance=2.5pt] (0.5,0)--(5.2,0);
  \filldraw[forwardcol] (-0.25,0) circle (5.5pt);
  \node[forwardcol,below=16pt] at (-0.25,0) {$\mathcal M$};
  \draw[backwardcol,very thick,double,double distance=2.5pt] (5.2,5)--(0.5,5);
  \draw[backwardcol,very thick,double,double distance=2.5pt] (-1.0,5)--(-6.5,5);
  \filldraw[backwardcol] (-0.25,5) circle (5.5pt);
  \node[backwardcol,above=16pt] at (-0.25,5) {$\mathcal M^*$};
  \draw[photoncol,very thick,decorate,decoration={snake,amplitude=2.5pt,segment length=6pt}]
    (-5.0,0)..controls(-5.0,-2.3)and(2.0,-2.3)..(2.0,0);
  \node[photoncol,below=2pt] at (-1.5,-2.3) {$D_{++}$};
  \draw[photoncol!55!black,very thick,decorate,decoration={snake,amplitude=2.5pt,segment length=6pt}]
    (-5.0,5)..controls(-5.0,7.3)and(2.0,7.3)..(2.0,5);
  \node[photoncol!55!black,above=2pt] at (-1.5,7.3) {$D_{--}$};
  \draw[cutcol,very thick,decorate,decoration={snake,amplitude=2.5pt,segment length=6pt}] (-3.0,0)--(-3.0,5);
  \node[cutcol,left=6pt] at (-3.0,2.5) {$D_{+-}$};
  \draw[cutcol!70!black,very thick,decorate,decoration={snake,amplitude=2.5pt,segment length=6pt}] (1.5,5)--(1.5,0);
  \node[cutcol!70!black,right=6pt] at (1.5,2.5) {$D_{-+}$};
\end{tikzpicture}%
}
\caption{Straight forward and backward CTP branches with the final-time insertion $\Pi_f\otimes\1_\soft$.  Same-branch photon lines generate virtual terms; cross-branch lines generate unresolved real-emission terms.  Removing $\Pi_f$ and summing over $f$ gives the full trace.}
\label{fig:ctp_main}
\end{figure}

I use the standard real-time convention in which the same-branch kernels include the Feynman-rule factor $i$, while the Wightman kernels are the real on-shell distributions displayed below.  Suppressing the common Lorentz and gauge tensor structure, vacuum photons obey
\begin{align}
 D_{++}(k)&=\frac{i}{k^2+i0},&
 D_{--}(k)&=\frac{-i}{k^2-i0},\\
 D_{+-}(k)&=2\pi\theta(-k^0)\delta(k^2),&
 D_{-+}(k)&=2\pi\theta(k^0)\delta(k^2).
 \label{eq:vacuum_ctp}
\end{align}
Sokhotski--Plemelj theorem on the real-time path gives
\begin{equation}
 D_{++}+D_{--}=D_{+-}+D_{-+}=2\pi\delta(k^2).
 \label{eq:ctp_identity}
\end{equation}
This is the one-line spectral identity behind the one-loop cancellation.  Appendix~\ref{app:ctp} records its retarded, advanced, and Keldysh forms.

\section{One-Loop Infrared Cancellation}
\label{sec:oneloop}

For clarity take one unit-charged particle deflected from $p_i$ to $p_f$; the multi-leg generalization follows from Eq.~\eqref{eq:eikonal_current}.  On shell,
\begin{equation}
 J^\mu(k)=\frac{p_f^\mu}{p_f\cdot k}-\frac{p_i^\mu}{p_i\cdot k},
 \qquad k^0=\omega=|\mathbf k|.
 \label{eq:single_current}
\end{equation}
Define the positive soft-radiation kernel
\begin{equation}
 N_{fi}(\lambda,\Lambda)
 \equiv e^2\sum_{r=1,2}\int_{\lambda<\omega<\Lambda}
 \frac{\dd^3{\bf k}}{(2\pi)^3\,2\omega_{\bf k}}
 \left|J(k)\cdot\epsilon_r^*(k)\right|^2\ge0.
 \label{eq:Ndef}
\end{equation}
It is simultaneously the mean number of unresolved photons in the coherent state and the exponent of the no-emission probability.  For a massive charge,
\begin{equation}
 N_{fi}=\frac{\alpha}{\pi}\,\mathcal A_{fi}\log\frac{\Lambda}{\lambda},
 \qquad
 \mathcal A_{fi}=\frac{1}{\beta_{if}}
 \log\frac{1+\beta_{if}}{1-\beta_{if}}-2\ge0,
 \label{eq:Nangular}
\end{equation}
where $\beta_{if}=\sqrt{1-m^4/(p_i\cdot p_f)^2}$; Appendix~\ref{app:sudakov} gives the angular integral.

The virtual soft correction is the zero-photon branch matrix element.  To first order in $e^2$,
\begin{equation}
 s_0[J_{fi}]
 =1-\frac12N_{fi}+i\phi_{fi}+O(e^4),
 \label{eq:s0_one_loop}
\end{equation}
where $\phi_{fi}$ is the principal-value Coulomb phase.  It may be infrared sensitive, but it cancels from the probability.  Hence the infrared-singular virtual contribution to the fixed-$f$ probability is
\begin{equation}
 \delta P^{(1),\mathrm{IR}}_{f,\virt}
 =-|\mathcal M^{\hard}_{fi}|^2N_{fi}.
 \label{eq:virtual_ir}
\end{equation}

The one-photon soft theorem gives
\begin{equation}
 \mathcal M_{fi+\gamma_r(k)}
 =e\,J(k)\cdot\epsilon_r^*(k)\,\mathcal M^{\hard}_{fi}+O(\omega^0).
 \label{eq:one_photon_amp}
\end{equation}
Squaring, summing over physical polarizations, and integrating over the unresolved phase space yields
\begin{equation}
 \delta P^{(1),\mathrm{IR}}_{f,\realp}
 =|\mathcal M^{\hard}_{fi}|^2N_{fi}.
 \label{eq:real_ir}
\end{equation}
Therefore
\begin{equation}
 \delta P^{(1),\mathrm{IR}}_{f,\virt}
 +\delta P^{(1),\mathrm{IR}}_{f,\realp}=0.
 \label{eq:bn_one_loop}
\end{equation}
Equation~\eqref{eq:bn_one_loop} displays the cancellation of the infrared-singular eikonal terms retained in the present calculation.  The remaining finite one-loop contributions are not evaluated in this section because it keeps only the leading-soft part.  Appendix~\ref{app:finite_ctp} evaluates them directly from the same CTP density-matrix expansion in two representative examples: the electron-line correction to Coulomb scattering and the final-state muon correction to $e^+e^-\to\mu^+\mu^-$.  In both cases the same-branch $++/--$ terms and the cross-branch $+-/-+$ terms reproduce the standard infrared-finite result without leaving the closed contour.

The CTP derivation shows why the two terms contain the same kernel.  The absorptive part of the time-ordered propagator is
\begin{equation}
 \frac{1}{k^2+i0}=\mathrm{P.V.}\frac1{k^2}-i\pi\delta(k^2),
 \label{eq:SP}
\end{equation}
so the virtual correction contains the same on-shell $\delta(k^2)$ measure as the Wightman propagators $D_{+-}$ and $D_{-+}$.  Inserting Eq.~\eqref{eq:ctp_identity} into the quadratic influence action gives, at equal histories,
\begin{equation}
 e^2J\cdot\left(D_{++}+D_{--}-D_{+-}-D_{-+}\right)\cdot J=0.
 \label{eq:one_loop_ctp_zero}
\end{equation}
Thus the loop and cut contributions are two placements of the same on-shell spectral kernel on the doubled contour.  In vacuum the shared on-shell spectral weight is equivalently encoded by the zero-temperature fluctuation--dissipation relation between the Hadamard fluctuations and the absorptive part of the retarded response.  That relation explains why the two kernels match; the fixed-outcome cancellation additionally uses the soft theorem to ensure that both terms are multiplied by the same hard transition amplitude.

\section{Equal-History Normalization at Second Order in the Leading-Soft Sector}
\label{sec:two_loop}

Within the abelian leading-soft approximation the hard trajectory is represented by its eikonal current and the soft photon field is free.  The resulting influence functional is evaluated exactly in Eq.~\eqref{eq:IF_keldysh}.  Equal histories have $J_\Delta=0$, so
\begin{equation}
 \mathcal F[J_{fi},J_{fi}]
 =\Tr_\soft\!\left(U_\soft[J_{fi}]\rho_\soft U_\soft^\dagger[J_{fi}]\right)=1.
 \label{eq:FJJ_exact}
\end{equation}
Equivalently, the soft evolution is a unitary displacement and its Fock probabilities are normalized.

The order-$\alpha^2$ check of this equal-history normalization is most transparent in terms of $N_{fi}=O(\alpha)$.  The probabilities for zero, one, and two unresolved photons are
\begin{align}
 P_0&=e^{-N_{fi}}
     =1-N_{fi}+\frac12N_{fi}^2+O(\alpha^3),\\
 P_1&=e^{-N_{fi}}N_{fi}
     =N_{fi}-N_{fi}^2+O(\alpha^3),\\
 P_2&=e^{-N_{fi}}\frac{N_{fi}^2}{2}
     =\frac12N_{fi}^2+O(\alpha^3).
 \label{eq:P012}
\end{align}
At $O(\alpha)$ the terms $-N_{fi}$ and $+N_{fi}$ are the virtual and one-real contributions.  At $O(\alpha^2)$ the virtual--virtual, virtual--real, and two-real terms cancel as
\begin{equation}
 \frac12N_{fi}^2-N_{fi}^2+\frac12N_{fi}^2=0,
 \qquad
 P_0+P_1+P_2=1+O(\alpha^3).
 \label{eq:two_loop_check}
\end{equation}
The same statement follows directly by expanding the coincident-history exponent: if the virtual and real one-loop contributions are $W_V=-N_{fi}$ and $W_R=+N_{fi}$, then the second-order coefficient is
$\frac12W_V^2+W_VW_R+\frac12W_R^2=\frac12(W_V+W_R)^2=0$.

This check is deliberately restricted to the $O(\alpha^2)$ coefficient of the leading-soft displacement sector.  It is not a substitute for evaluating the full set of two-loop QED diagrams, and it does not assert that arbitrary nested and crossed two-loop diagrams separately reduce to products of one-loop kernels.  A complete two-loop OQS/CTP calculation contains the same virtual graphs, cut graphs, recoil terms, subleading-soft terms, vacuum-polarization insertions, hard-loop regions, measurement function, and phase space as the conventional full-QED calculation, and therefore gives the same complete finite result.  Those contributions are simply not evaluated in the present leading-soft check.  What is established exactly here is the normalization of the leading-soft displacement sector.

For the fixed hard outcome, the leading-soft contribution is
\begin{equation}
 P_f^{\mathrm{LS}}(i;\Lambda)
 =|\mathcal M^{\hard}_{fi}(\Lambda)|^2
   \sum_{\mathbf n}|s_{\mathbf n}[J_{fi}]|^2
 =|\mathcal M^{\hard}_{fi}(\Lambda)|^2.
 \label{eq:fixed_f_all_orders}
\end{equation}
The unresolved sum removes the regulator-dependent leading-soft factor.  Equation~\eqref{eq:FJJ_exact} normalizes the soft channel, while Eq.~\eqref{eq:soft_theorem} identifies that channel in every unresolved sector of the fixed hard outcome.  Together they give the leading-soft BN cancellation.  The full fixed-outcome probability may still depend on the physical resolution $\Lambda$ and contains finite virtual and real terms, recoil effects, bin dependence, and subleading-soft contributions.  These contributions are not changed by the OQS/CTP organization: when computed for the same observable and at the same perturbative order, they are exactly the same as in the conventional full-QED result.

One might try to extend my approach to the more general Kinoshita-Lee-Nauenberg theorem, which incorporates an additional sum over initial degenerate states. Frye, Hannesdottier, Paul, Schwartz, and Yan showed that, once forward scattering is included, summing over either the initial or the final degenerate states can suffice for infrared finiteness~\cite{Frye:2018xjj}.

\section{Fermi's Golden Rule and the Optical Theorem}
\label{sec:fgr}

Use normalized asymptotic states and define invariant amplitudes by
\begin{equation}
 \langle X|iT|i\rangle
 =i(2\pi)^4\delta^{(4)}(P_X-P_i)\,\mathcal M_{Xi}.
 \label{eq:invariant_amp}
\end{equation}
For a resolved hard bin $f$, the soft-inclusive differential cross section is given by Fermi's Golden Rule~\cite{Dirac:1927df}:
\begin{equation}
 \dd\sigma^{\incl}_{i\to f}
 =\frac{1}{\mathcal F_i}
 \sum_{\mathbf n}
 |\mathcal M_{f\mathbf n,i}|^2\,
 \dd\Phi_{f\mathbf n},
 \label{eq:FGR}
\end{equation}
where $\mathcal F_i$ is the incoming flux and $\dd\Phi_{f\mathbf n}$ includes the full four-momentum-conserving delta function and the Lorentz-invariant phase space of the resolved hard configuration and unresolved photons.  For a decay, $1/\mathcal F_i$ is replaced by $1/(2E_i)$.  The sum over $\mathbf n$ is exactly the soft partial trace in Eq.~\eqref{eq:pf_intro}; the Fermi's Golden Rule is therefore the long-time rate form of the same reduced-state probability.

Summing also over all resolved hard outcomes gives the optical theorem,
\begin{equation}
 2\im\mathcal M_{ii}
 =\sum_{f,\mathbf n}\int\
 |\mathcal M_{f\mathbf n,i}|^2 \, \dd\Phi_{f\mathbf n},
 \label{eq:optical}
\end{equation}
with the same state-normalization convention as Eq.~\eqref{eq:invariant_amp}.  This is a global sum.  At fixed $f$, the soft theorem supplies the additional process-resolved information that turns the common on-shell spectral weight into the BN cancellation of Eqs.~\eqref{eq:virtual_ir} and \eqref{eq:real_ir}.  The CTP cut is its graphical implementation: the on-shell part of a same-branch propagator is matched to a cross-branch cut line carrying the same eikonal current.

\section{Soft-Shell Coarse Graining of Hard-Branch Coherence}
\label{sec:resolution_flow}

Soft-cloud suppression of off-diagonal hard-state elements and its dependence on an infrared resolution scale are known~\cite{Carney:2017oxp,Gomez:2018,Fukuyama:2026resolution}.  Scale-parametrized Lindblad evolution also has precedents in momentum-ordered soft dynamics, rapidity evolution, and renormalization-group evolution of collider density matrices~\cite{Neill:2018soft,Armesto:2019cgc,Li:2020jimwlk,Gu:2025rg}.  Here I isolate the narrower soft-QED result: an explicit fixed-branch shell channel whose jump operators are fixed by the eikonal emission amplitudes and whose Gorini-Kossakowski-Sudershan-Lindblad (GKSL)~\cite{GKSL:1976} structure follows directly from the CTP virtual--real decomposition.  Fix once and for all a finite branch Hilbert space
\begin{equation}
 \mathcal H_B=\mathrm{span}\{|a\rangle\},
 \label{eq:branch_space}
\end{equation}
whose orthonormal states are narrow, nonoverlapping wave packets of resolved hard configurations sharing the same incoming preparation.  Each outgoing branch may contain any finite collection of charged particles and resolved photons.  Over a soft window $\Lambda_1<\omega<\Lambda_2\ll Q$, hold this branch space, its hard QED amplitudes, and its measurement definition fixed, and trace only the additional photon shell.  Evaluate the leading eikonal current $J_a$ at the central momenta of each packet.

Let $\mathcal H_{21}$ be the Fock space of the photon shell $\Lambda_1<\omega<\Lambda_2$.  The leading-soft interaction defines the controlled displacement
\begin{equation}
 V_{21}
 =\sum_a|a\rangle\langle a|\otimes
 D_{21}(\alpha_a),
 \label{eq:shell_isometry}
\end{equation}
which is unitary on $\mathcal H_B\otimes\mathcal H_{21}$.  With the shell initially in the vacuum, the shell channel on the fixed branch space is
\begin{equation}
 \mathcal E_{21}[\rho]
 =\Tr_{21}\!\left[V_{21}
 (\rho\otimes|0\rangle\langle0|_{21})V_{21}^\dagger\right].
 \label{eq:shell_channel}
\end{equation}
Its matrix elements are
\begin{equation}
 (\mathcal E_{21}\rho)_{ab}
 =C^{21}_{ab}\rho_{ab},
 \qquad
 C^{21}_{ab}=\langle\alpha_b^{21}|\alpha_a^{21}\rangle .
 \label{eq:schur_flow}
\end{equation}
Thus no comparison of density matrices living in different hard Hilbert spaces is required: every shell map acts on the same prescribed branch space $\mathcal H_B$.

The coherent-state overlap gives
\begin{equation}
 C^{21}_{ab}
 =e^{i\Phi^{21}_{ab}}
 \exp\!\left[-\frac{e^2}{2}\sum_r
 \int_{\Lambda_1<\omega<\Lambda_2}
 \frac{\dd^3k}{(2\pi)^3\,2\omega}
 \left|(J_a-J_b)\cdot\epsilon_r^*\right|^2\right].
 \label{eq:resolution_overlap}
\end{equation}
Because the eikonal current scales as $1/\omega$, define
\begin{equation}
 \gamma_{ab}
 \equiv \frac{e^2}{4(2\pi)^3}\sum_r\int\dd\Omega_{\hat k}\,
 \left|\omega(J_a-J_b)\cdot\epsilon_r^*(\hat k)\right|^2\ge0,
 \qquad \gamma_{aa}=0.
 \label{eq:gamma_ab}
\end{equation}
Current conservation gives $k\cdot J_a=k\cdot J_b=0$.  Therefore the polarization shift
$\epsilon_r^\mu\to\epsilon_r^\mu+c_rk^\mu$ leaves the integrand unchanged, and $\gamma_{ab}$ is gauge independent.  In the sharp scale-invariant leading-soft window,
\begin{equation}
 |(\mathcal E_{21}\rho)_{ab}|
 =|\rho_{ab}|
 \left(\frac{\Lambda_2}{\Lambda_1}\right)^{-\gamma_{ab}}.
 \label{eq:power_coherence}
\end{equation}
Coherence is preserved precisely when the two hard branches have the same leading asymptotic soft current.

\begin{proposition}[Fixed-branch soft-shell channel]
For any finite branch space $\mathcal H_B$, Eq.~\eqref{eq:shell_channel} is a completely positive, trace-preserving, unital Schur channel.  Channels associated with adjacent, disjoint photon shells compose exactly.  With fixed eikonal currents and a sharp energy partition, the channel depends only on $\ell_2-\ell_1$, where $\ell=\log\Lambda$, and forms a homogeneous semigroup.  For a monotone smooth resolution profile with nonnegative shell weight, or for a scale-dependent kernel, the same construction gives a CP-divisible two-parameter flow with a scale-local generator.
\end{proposition}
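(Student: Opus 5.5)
The plan is to read every property directly off the Stinespring/controlled-displacement form in Eq.~\eqref{eq:shell_channel} and the Schur form in Eq.~\eqref{eq:schur_flow}. I will handle CP, trace preservation and unitality first, then composition, then the semigroup, and finally the smooth-profile generalization.

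\emph{Step 1: CPTP and unital.} Equation~\eqref{eq:shell_channel} is a unitary conjugation on $\mathcal H_B\otimes\mathcal H_{21}$ by $V_{21}$ of Eq.~\eqref{eq:shell_isometry}, followed by a partial trace. It is therefore CPTP by Stinespring, with Kraus operators
\[
K^{21}_{\mathbf n}=\sum_a\langle\mathbf n|\alpha^{21}_a\rangle\,|a\rangle\langle a| .
\]
These are diagonal, so $\sum_{\mathbf n}K^{21}_{\mathbf n}K^{21\dagger}_{\mathbf n}=\sum_a\langle\alpha_a|\alpha_a\rangle|a\rangle\langle a|=\1$, which gives unitality. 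Independently, the matrix $C^{21}_{ab}=\langle\alpha_b|\alpha_a\rangle$ is a Gram matrix, hence positive semidefinite, and $C^{21}_{aa}=1$. By the Schur product theorem the map $\rho\mapsto C\circ\rho$ is therefore CP, trace preserving and unital, which also establishes the ``Schur channel'' claim.

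\emph{Step 2: composition.} Take adjacent shells $(\Lambda_1,\Lambda_2)$ and $(\Lambda_2,\Lambda_3)$. The Fock space factorizes as $\mathcal H_{31}=\mathcal H_{21}\otimes\mathcal H_{32}$, and since distinct modes commute, the multimode displacement factorizes as $D_{31}(\alpha_a)=D_{21}(\alpha_a)\otimes D_{32}(\alpha_a)$. Then $V_{31}=V_{32}V_{21}$, each factor acting on its own shell. Tracing the two shells in sequence gives $\mathcal E_{31}=\mathcal E_{32}\circ\mathcal E_{21}$; at the Schur level this is $C^{31}_{ab}=C^{21}_{ab}C^{32}_{ab}$. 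The phases $\Phi_{ab}$ need care here. They come from the coherent-state overlap $\mathrm{Im}\,(\alpha_b,\alpha_a)$, which is a mode integral, so they are additive over disjoint shells. Cross-shell terms that would spoil additivity arise only if the displacement carries a Baker--Campbell--Hausdorff phase from noncommuting pieces, which does not happen for disjoint modes.

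\emph{Step 3: semigroup.} For fixed currents, $\omega J_a$ depends only on $\hat k$, so
\[
\int_{\Lambda_1}^{\Lambda_2}\frac{\omega^2\,\dd\omega}{2\omega}\,\omega^{-2}
=\tfrac12(\ell_2-\ell_1).
\]
Thus the modulus is $\exp[-\gamma_{ab}(\ell_2-\ell_1)]$ with $\gamma_{ab}$ as in Eq.~\eqref{eq:gamma_ab}. I will check the constant: $\tfrac12\cdot\tfrac12\cdot e^2/(2\pi)^3$ matches the $1/4$ in Eq.~\eqref{eq:gamma_ab}. The phase scales the same way. Hence $\mathcal E_{21}=e^{(\ell_2-\ell_1)\mathcal L}$, where
\[
\mathcal L\rho=\sum_{ab}(-\gamma_{ab}+i\theta_{ab})\rho_{ab}|a\rangle\langle b|.
\]
Combined with Step 2, this gives a homogeneous one-parameter semigroup in $\ell$.

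\emph{Step 4: smooth profiles and scale-dependent kernels.} This is the main obstacle. I will replace the sharp window by a weight $w(\omega)$ that is monotone in the resolution scale. The shell weight $\dd w\ge0$ defines the amplitude for the $\ell$-shell as $\sqrt{\dd w}\,\alpha_a$, and each infinitesimal shell is again a controlled displacement, so it is CP by Step 1. Nonnegativity of the shell weight is exactly what makes the increment a genuine displacement on a (weighted) mode space. Two-parameter composition follows as in Step 2, which yields CP-divisibility. The generator at scale $\ell$ is $\mathcal L_\ell\rho=\sum_{ab}(-\gamma_{ab}(\ell)+i\theta_{ab}(\ell))\rho_{ab}|a\rangle\langle b|$ with $\gamma_{ab}(\ell)\ge0$.

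To exhibit its GKSL form I will write $\gamma_{ab}(\ell)=\tfrac12\sum_{\mu}|L^\mu_a-L^\mu_b|^2$, with diagonal jump operators $L^\mu=\sum_a L^\mu_a|a\rangle\langle a|$ indexed by direction and polarization, where $L^\mu_a\propto e\,\omega J_a\cdot\epsilon^*$. The real parts then reproduce $L\rho L^\dagger-\tfrac12\{L^\dagger L,\rho\}$, and the phase becomes a diagonal Hamiltonian, up to the $a$-only terms absorbed into $H$. The delicate point is showing that the phase $\theta_{ab}$ is of commutator form, $\theta_{ab}=h_a-h_b$, plus the antisymmetric imaginary part of the $L$ cross terms. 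I expect this to follow because $\Phi$ comes from $\mathrm{Im}\,(\alpha_b,\alpha_a)$ together with branch self-phases, but it must be checked explicitly.
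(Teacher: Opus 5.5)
Your proposal is correct and takes the same route as the paper. The paper likewise uses the Gram-matrix/Schur-product argument for complete positivity, trace preservation and unitality, factorization of the independent shell Fock spaces for $\mathcal E_{31}=\mathcal E_{32}\circ\mathcal E_{21}$, and homogeneity from an $\ell$-independent angular kernel; as in Appendix~\ref{app:gksl_derivation}, it treats the smooth-profile and scale-dependent cases through infinitesimal controlled displacements. The phase issue you flag as delicate is in fact immediate: the first-order expansion of the full complex overlap $\langle\dd\alpha_b|\dd\alpha_a\rangle$ is $\kappa\sum_r\int\dd\Omega\,[q_{ar}q_{br}^*-\tfrac12(|q_{ar}|^2+|q_{br}|^2)]$, which is exactly the $(a,b)$ entry of $L\rho L^\dagger-\tfrac12\{L^\dagger L,\rho\}$ including its imaginary part, so only the branch self-phases $h_a$ remain to be placed in $H_\ell$.
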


\begin{proof}
The matrix $C^{21}$ is a Gram matrix of coherent states and is therefore positive semidefinite, with $C^{21}_{aa}=1$.  The Schur-product theorem makes $\rho\mapsto C^{21}\circ\rho$ completely positive, and the unit diagonal makes it trace preserving and unital.  Independent shell Fock spaces factorize, so $C^{31}_{ab}=C^{32}_{ab}C^{21}_{ab}$ and $\mathcal E_{31}=\mathcal E_{32}\circ\mathcal E_{21}$.  Scale homogeneity follows only when the leading-soft angular kernel is independent of $\ell$.
\end{proof}

For a possibly scale-dependent shell rate, the scale-local generator has diagonal GKSL form,
\begin{equation}
 \frac{\dd\rho}{\dd\ell}
 =-i[H_{\ell},\rho]
 +\sum_{r=1,2}\int\dd\Omega
 \left(L_{r\hat k}(\ell)\rho L^\dagger_{r\hat k}(\ell)
 -\frac12\left\{L^\dagger_{r\hat k}(\ell)L_{r\hat k}(\ell),\rho\right\}\right),
 \label{eq:resolution_gksl}
\end{equation}
where in the scale-invariant leading-soft limit
\begin{equation}
 L_{r\hat k}
 =\sqrt{\frac{e^2}{2(2\pi)^3}}
 \sum_a q_{ar}(\hat k)|a\rangle\langle a|,
 \qquad
 q_{ar}(\hat k)=\omega J_a\cdot\epsilon_r^*.
 \label{eq:resolution_lindblad_ops}
\end{equation}
The diagonal entry $\sqrt{\kappa}\,q_{ar}$ of $L_{r\hat k}$ on branch $a$ is the leading-soft shell-emission amplitude density; the corresponding infinitesimal Kraus amplitude is $\sqrt{\dd\ell}\,\sqrt{\kappa}\,q_{ar}$.  These jump operators are therefore fixed by the eikonal current rather than introduced phenomenologically.  Their CTP origin is direct: the cross-branch Wightman contraction generates $L_{r\hat k}\rho L_{r\hat k}^\dagger$, while the $++$ and $--$ contractions generate the virtual subtraction $-\tfrac12\{L_{r\hat k}^\dagger L_{r\hat k},\rho\}$ and the principal-value phase contained in $H_\ell$.  The detailed derivation from the infinitesimal-shell controlled displacement and the CTP kernel, including the normalization of the Lindblad operators and their Kraus expansion, is given in Appendix~\ref{app:gksl_derivation}.  The damping part gives
\begin{equation}
 \frac{\dd|\rho_{ab}|}{\dd\ell}=-\gamma_{ab}(\ell)|\rho_{ab}|,
 \qquad
 \frac{\dd}{\dd\ell}\Tr\rho^2
 =-2\sum_{a,b}\gamma_{ab}(\ell)|\rho_{ab}|^2\le0.
 \label{eq:resolution_master}
\end{equation}
For a finite branch space, unitality also implies nondecreasing von Neumann entropy.

For a two-branch interference observable, define
$\mathcal V_{ab}=2|\rho_{ab}|/(\rho_{aa}+\rho_{bb})$.  In the sharp scale-invariant regime, the fixed-branch shell map leaves the diagonal entries unchanged and therefore
\begin{equation}
 \frac{\dd\log\mathcal V_{ab}}{\dd\log\Lambda}
 =-\gamma_{ab},
 \qquad
 \frac{\mathcal V_{ab}(\Lambda_2)}{\mathcal V_{ab}(\Lambda_1)}
 =\left(\frac{\Lambda_2}{\Lambda_1}\right)^{-\gamma_{ab}},
 \label{eq:visibility_prediction}
\end{equation}
provided the whole interval remains inside the controlled soft window.  This is a prediction for repeated coarse grainings of the same hard-branch ensemble.  It is not a claim that a complete experimental cross section has threshold-independent diagonal entries: changing an actual event definition can also change the hard QED amplitudes entering the selected bin, resolved-photon classification, and bin migration.  For any fixed event definition, however, an OQS/CTP calculation and a conventional full-QED calculation give the same diagonal probabilities.

For an initially pure two-branch state
$\sqrt p\,|a\rangle+\sqrt{1-p}\,|b\rangle$, define
$c_{ab}(\Lambda_2,\Lambda_1)=C^{21}_{ab}$ after removing its phase.  The reduced-state eigenvalues are
\begin{equation}
 \lambda_{\pm}=\frac12\left[1\pm
 \sqrt{1-4p(1-p)\left(1-|c_{ab}|^2\right)}\right],
 \qquad
 S_{\hard}=-\sum_{\sigma=\pm}\lambda_\sigma\log\lambda_\sigma.
 \label{eq:two_branch_entropy}
\end{equation}
For $p=1/2$, $\lambda_\pm=(1\pm|c_{ab}|)/2$, so the visibility slope fixes the entropy growth of the branch state.  The resolution dependence of soft-cloud overlap is not itself new; the result here is its explicit realization as a fixed-space Schur channel with a scale-local GKSL generator.  Alternative asymptotic dressed-state constructions can assign the leading-soft coherence differently~\cite{DeLisle:2022}; the present statement applies to the regulated inclusive Fock-space coarse graining defined in Sec.~\ref{sec:setup}.

\section{Coherent-State Consequences}
\label{sec:coherent}

The soft evolution for a fixed hard history is the multimode displacement
\begin{equation}
 D(\alpha_{fi})
 =\exp\!\left[\sum_r\int\frac{\dd^3k}{(2\pi)^3}
 \left(\alpha_{fi,r}(k)a_r^\dagger(k)-\alpha_{fi,r}^*(k)a_r(k)\right)\right],
 \label{eq:displacement}
\end{equation}
with mode function chosen so that
\begin{equation}
 \|\alpha_{fi}\|^2
 =e^2\sum_r\int_{\lambda<\omega<\Lambda}
 \frac{\dd^3k}{(2\pi)^3\,2\omega}
 |J_{fi}\cdot\epsilon_r^*|^2=N_{fi}.
 \label{eq:alpha_norm}
\end{equation}
The zero-emission Kraus matrix element is
\begin{equation}
 \langle f|K_0|i\rangle
 =\mathcal M_{fi}^{\hard}\,e^{-N_{fi}/2+i\phi_{fi}},
 \qquad
 \frac{|\langle f|K_0|i\rangle|^2}{|\mathcal M_{fi}^{\hard}|^2}=e^{-N_{fi}}.
 \label{eq:sudakov}
\end{equation}
This is the soft no-emission (Sudakov) probability on the specified hard branch~\cite{Sudakov:1954sw}; it is a matrix element, not an operator norm.

The complementary soft output of the same controlled-displacement dilation has Poisson statistics,
\begin{equation}
 P(N_\gamma=n)=e^{-N_{fi}}\frac{N_{fi}^n}{n!},
 \qquad
 G(z)=\sum_{n=0}^{\infty}z^nP(n)=e^{N_{fi}(z-1)}.
 \label{eq:poisson}
\end{equation}
Hence the mean and variance of the unresolved photon number are both $N_{fi}$.  Independent frequency shells factorize in the same dilation that produced the channel composition in Sec.~\ref{sec:resolution_flow}.

Writing
\begin{equation}
 \mathcal A_{fi}=\int\frac{\dd\Omega}{4\pi}
 \left[-\omega^2J_{fi}^2(\omega,\hat k)\right],
 \label{eq:Aangular}
\end{equation}
Eq.~\eqref{eq:Ndef} gives the number spectrum
\begin{equation}
 \frac{\dd\bar n}{\dd\omega}
 =\frac{\alpha}{\pi}\frac{\mathcal A_{fi}}{\omega}.
 \label{eq:number_spectrum}
\end{equation}
The photon number diverges logarithmically as $\lambda\to0$, but the radiated energy in the unresolved band is finite,
\begin{equation}
 \langle E_\soft\rangle
 =\int_\lambda^\Lambda\dd\omega\,\omega\frac{\dd\bar n}{\dd\omega}
 =\frac{\alpha}{\pi}\mathcal A_{fi}(\Lambda-\lambda).
 \label{eq:soft_energy}
\end{equation}
The infrared singularity is therefore a divergence of multiplicity, not energy.  The same mode amplitudes determine the branch-overlap distance in Eq.~\eqref{eq:resolution_overlap}; photon counting in the complementary soft output and hard-branch dephasing are two observables of one hard--soft unitary dilation.

\section{Conclusions}
\label{sec:conclusions}

Soft QED is naturally organized as an open quantum system when photons below a finite resolution are unobserved.  The reduced hard density matrix is the primary object: its diagonal elements are fixed-outcome inclusive probabilities and its off-diagonal elements retain coherence information absent from an ordinary cross-section calculation.  The hard outcome may be any finite resolved multiparticle configuration.  Kraus operators label unresolved photon states, while the effects $F_f$ encode resolved hard measurements.

The CTP formulation places the amplitude and its conjugate on one doubled contour.  At one loop, the absorptive part of a same-branch propagator and the cross-branch Wightman propagators carry the same on-shell eikonal kernel, giving equal and opposite infrared terms at fixed hard outcome.  In the abelian leading-soft eikonal sector, $\mathcal F[J,J]=1$ exactly normalizes the soft displacement channel.  Combined with the soft theorem, this gives the all-orders cancellation of the regulator-dependent leading-soft factor.  The order-$\alpha^2$ identity $\frac12N^2-N^2+\frac12N^2=0$ checks that leading-soft normalization; it does not replace the evaluation of the remaining two-loop terms.  When the full set of terms is included with the same observable, the OQS/CTP expansion is exactly the same QED calculation and gives the same complete infrared-finite result as the conventional full-QED organization.

The additional channel result is formulated on a fixed finite space of hard wave-packet branches.  Tracing an independent photon shell gives a CPTP unital Schur map.  Sharp shells with fixed eikonal currents form a homogeneous dephasing semigroup in $\log\Lambda$; more general resolution profiles give a CP-divisible scale flow.  Scale-parametrized Lindblad evolution itself has precedents in high-energy theory.  The specific contribution here is the explicit leading-soft QED generator: the diagonal entries of its jump operators are fixed by the eikonal shell-emission amplitudes, and the jump term and virtual subtraction arise respectively from cross-branch and same-branch CTP contractions.  The resulting current-difference kernel determines a gauge-independent logarithmic visibility slope and monotonic purity loss.

The construction uses massive charges, factorized initial hard--soft states, a vacuum soft environment, regulated perturbative asymptotic Fock space, and the leading-soft eikonal approximation.  It does not assert a nonperturbative hard--soft factorization of the physical charged Hilbert space, nor does it address how an asymptotic dressed-state formulation redistributes leading-soft coherence.  Within its stated domain, it gives a compact OQS account of BN cancellation, Sudakov suppression, soft-photon counting, and soft-shell decoherence.

\section*{Acknowledgments}
I acknowledge feedbacks from participants of workshops at the Institute for Nuclear Theory (December 2025, USA) and at the Mainz Institute for Theoretical Physics (April 2026, Germany). Part of this work was performed while I was visiting the Institute for Pure and Applied Mathematics (IPAM, USA) and the Simons Institute for the Theory of Computing (SIfTC, USA). This work was supported in part by the DFG Cluster of Excellence PRISMA+ (Project ID 39083469), by the U.S. National Science Foundation through IPAM and SIfTC, by the Simons Foundation through SIfTC, and by the National Research Foundation of Korea (NRF) (RS-2021-NR060112) and Kwangwoon University. 
\appendix

\section{CTP Propagator Identities and Vacuum Fluctuation Dissipation Relation}
\label{app:ctp}

Define
\begin{align}
 D_R&=D_{++}-D_{+-}=D_{-+}-D_{--},\\
 D_A&=D_{++}-D_{-+}=D_{+-}-D_{--},\\
 D_\rho&=D_{-+}-D_{+-}=D_R-D_A,\\
 D_K&=D_{-+}+D_{+-}.
\end{align}
The contour identity is
\begin{equation}
 D_{++}+D_{--}=D_{+-}+D_{-+}=D_K.
\end{equation}
For a vacuum massless photon,
\begin{equation}
 D_\rho(k)=2\pi\,\mathrm{sgn}(k^0)\delta(k^2),
 \qquad
 D_K(k)=2\pi\delta(k^2)
 =\mathrm{sgn}(k^0)D_\rho(k).
\end{equation}
The first equality is the contour-ordering identity.  The further equilibrium relation
$D_K(k)=\mathrm{sgn}(k^0)D_\rho(k)$ is the zero-temperature fluctuation--dissipation relation (FDR); at temperature $T$ it becomes $D_K(k)=\coth(k^0/2T)D_\rho(k)$.  In the influence functional the Hadamard kernel is $D_H=D_K/2$.  FDR explains why the noise responsible for real radiation and the absorptive response responsible for virtual suppression contain the same spectral density.  Fixed-outcome BN cancellation additionally requires the soft theorem and the same hard history on both branches.

\section{Angular Integral for the Soft Exponent}
\label{app:sudakov}

For one massive charge deflected from $p_i$ to $p_f$,
\begin{equation}
 N_{fi}=e^2\int_{\lambda<\omega<\Lambda}
 \frac{\dd^3k}{(2\pi)^3\,2\omega}\left[-J^2(k)\right]
 =\frac{\alpha}{\pi}\mathcal A_{fi}\log\frac{\Lambda}{\lambda},
 \label{eq:N_app}
\end{equation}
where
\begin{equation}
 \mathcal A_{fi}=\int\frac{\dd\Omega}{4\pi}\left[-\omega^2J^2(k)\right].
 \label{eq:A_app}
\end{equation}
Choose a frame in which
$p_i=E(1,0,0,\beta)$ and
$p_f=E(1,\beta\sin\theta,0,\beta\cos\theta)$.  Then
\begin{equation}
 -\omega^2J^2
 =\frac{2(1-\beta^2\cos\theta)}
 {(1-\beta\hat v_i\cdot\hat k)(1-\beta\hat v_f\cdot\hat k)}
 -\frac{1-\beta^2}{(1-\beta\hat v_i\cdot\hat k)^2}
 -\frac{1-\beta^2}{(1-\beta\hat v_f\cdot\hat k)^2}.
\end{equation}
The two self terms each integrate to $-1$.  Feynman parametrizing the cross term gives the covariant result
\begin{equation}
 \mathcal A_{fi}
 =\frac1{\beta_{if}}\log\frac{1+\beta_{if}}{1-\beta_{if}}-2,
 \qquad
 \beta_{if}=\sqrt{1-\frac{m^4}{(p_i\cdot p_f)^2}}.
 \label{eq:A_final}
\end{equation}
It is nonnegative and vanishes when $p_f=p_i$.  In the ultrarelativistic but massive regime $p_i\cdot p_f\gg m^2$,
\begin{equation}
 \mathcal A_{fi}
 =2\left[\log\frac{2p_i\cdot p_f}{m^2}-1\right]+O\!\left(\frac{m^4}{(p_i\cdot p_f)^2}\log\frac{p_i\cdot p_f}{m^2}\right),
\end{equation}
so
\begin{equation}
 N_{fi}\simeq\frac{2\alpha}{\pi}
 \left[\log\frac{2p_i\cdot p_f}{m^2}-1\right]
 \log\frac{\Lambda}{\lambda}.
\end{equation}

The mass keeps the angular integral finite; the remaining divergence is the soft logarithm.

\section{Derivation of the Scale-Local GKSL Generator}
\label{app:gksl_derivation}

This appendix derives Eq.~\eqref{eq:resolution_gksl} directly from the controlled displacement of an infinitesimal photon shell.  Let \(\ell=\log\Lambda\) and trace the shell \(\ell<\log\omega<\ell+\dd\ell\), while the finite hard-branch space \(\mathcal H_B=\mathrm{span}\{|a\rangle\}\) is held fixed.  For branch \(a\), define
\begin{equation}
 q_{ar}(\hat k,\ell)
 =\omega J_a(\omega,\hat k;\ell)\cdot\epsilon_r^*(\hat k),
 \qquad
 \kappa\equiv\frac{e^2}{2(2\pi)^3}.
 \label{eq:gksl_q_kappa}
\end{equation}
In the scale-invariant leading-soft limit \(q_{ar}\) is independent of \(\ell\); the derivation below also allows a scale-dependent current or a smooth nonnegative shell weight, which can be absorbed into \(q_{ar}(\hat k,\ell)\).

The displacement profile carried by the infinitesimal shell has norm and mutual inner product
\begin{align}
 \|\dd\alpha_a\|^2
 &=\kappa\,\dd\ell\sum_r\int\dd\Omega\,
 |q_{ar}(\hat k,\ell)|^2,
 \label{eq:gksl_alpha_norm}\\
 (\dd\alpha_b,\dd\alpha_a)
 &=\kappa\,\dd\ell\sum_r\int\dd\Omega\,
 q_{br}^*(\hat k,\ell)q_{ar}(\hat k,\ell).
 \label{eq:gksl_alpha_inner}
\end{align}
The coefficient follows directly from the on-shell measure:
\begin{equation}
 e^2\frac{\dd^3k}{(2\pi)^3 2\omega}
 =\frac{e^2}{2(2\pi)^3}\frac{\dd\omega}{\omega}\dd\Omega\,\omega^2,
 \qquad \frac{\dd\omega}{\omega}=\dd\ell,
 \label{eq:gksl_measure}
\end{equation}
which, together with \(J_a\sim\omega^{-1}\), gives Eqs.~\eqref{eq:gksl_alpha_norm} and \eqref{eq:gksl_alpha_inner}.

On the branch basis, the infinitesimal-shell isometry acts as
\begin{equation}
 V_{\ell+\dd\ell,\ell}
 |a\rangle|0\rangle
 =e^{-ih_a(\ell)\dd\ell}|a\rangle|\dd\alpha_a\rangle,
 \label{eq:gksl_inf_isometry}
\end{equation}
where the real diagonal quantity \(h_a(\ell)\) contains the principal-value phase generated in that shell.  Since coherent states obey
\begin{equation}
 \langle\beta|\alpha\rangle
 =\exp\!\left[-\frac12\|\alpha\|^2
 -\frac12\|\beta\|^2+(\beta,\alpha)\right],
 \label{eq:gksl_coherent_overlap}
\end{equation}
the shell map gives
\begin{align}
 (\mathcal E_{\ell+\dd\ell,\ell}\rho)_{ab}
 &=e^{-i[h_a(\ell)-h_b(\ell)]\dd\ell}
 \langle\dd\alpha_b|\dd\alpha_a\rangle\rho_{ab}
 \nonumber\\
 &=\rho_{ab}+\dd\ell\Bigg\{-i[h_a-h_b]
 +\kappa\sum_r\int\dd\Omega\,
 q_{ar}q_{br}^* \nonumber\\
 &\hspace{31mm}
 -\frac{\kappa}{2}\sum_r\int\dd\Omega\,
 \bigl(|q_{ar}|^2+|q_{br}|^2\bigr)\Bigg\}\rho_{ab}
 +O(\dd\ell^2).
 \label{eq:gksl_matrix_derivation}
\end{align}
All quantities in the second line are evaluated at \(\ell\).

The three terms in Eq.~\eqref{eq:gksl_matrix_derivation} are exactly the infinitesimal-shell CTP contributions.  The cross-branch Wightman contraction---$D_{-+}$ for one momentum routing, equivalently $D_{+-}$ after reversing the shell momentum---places the shell photon on shell and gives
\begin{equation}
 \left.\frac{\dd\rho_{ab}}{\dd\ell}\right|_{\rm cross}
 =\kappa\sum_r\int\dd\Omega\,
 q_{ar}q_{br}^*\rho_{ab}.
 \label{eq:gksl_ctp_cross}
\end{equation}
The $++$ insertion on the ket branch and the $--$ insertion on the bra branch give
\begin{equation}
 \left.\frac{\dd\rho_{ab}}{\dd\ell}\right|_{\rm same}
 =-\frac{\kappa}{2}\sum_r\int\dd\Omega\,
 \bigl(|q_{ar}|^2+|q_{br}|^2\bigr)\rho_{ab}
 -i(h_a-h_b)\rho_{ab}.
 \label{eq:gksl_ctp_same}
\end{equation}
The real parts of the same-branch kernels are the virtual normalization terms accompanying the cross-branch cut, while their principal-value parts produce the diagonal phase.  Thus Eqs.~\eqref{eq:gksl_ctp_cross} and \eqref{eq:gksl_ctp_same} are the scale-local version of the same CTP real--virtual pairing used in Sec.~\ref{sec:oneloop}.

Now define the diagonal operators
\begin{equation}
 H_\ell=\sum_a h_a(\ell)|a\rangle\langle a|,
 \qquad
 L_{r\hat k}(\ell)
 =\sqrt{\kappa}\sum_a q_{ar}(\hat k,\ell)
 |a\rangle\langle a|.
 \label{eq:gksl_appendix_operators}
\end{equation}
Their matrix elements satisfy
\begin{align}
 [L_{r\hat k}\rho L_{r\hat k}^\dagger]_{ab}
 &=\kappa q_{ar}q_{br}^*\rho_{ab},\\
 \left[\frac12\{L_{r\hat k}^\dagger L_{r\hat k},\rho\}\right]_{ab}
 &=\frac{\kappa}{2}
 (|q_{ar}|^2+|q_{br}|^2)\rho_{ab}.
 \label{eq:gksl_dissipator_elements}
\end{align}
Substituting Eq.~\eqref{eq:gksl_dissipator_elements} into Eq.~\eqref{eq:gksl_matrix_derivation} yields the operator equation
\begin{equation}
 \frac{\dd\rho}{\dd\ell}
 =-i[H_\ell,\rho]
 +\sum_r\int\dd\Omega\left(
 L_{r\hat k}\rho L_{r\hat k}^\dagger
 -\frac12\{L_{r\hat k}^\dagger L_{r\hat k},\rho\}
 \right),
 \label{eq:gksl_appendix_result}
\end{equation}
which is Eq.~\eqref{eq:resolution_gksl}.  Equivalently, the infinitesimal-shell Kraus operators may be chosen as
\begin{align}
 K_0
 &=\mathbf1-\dd\ell\left(
 iH_\ell+\frac12\sum_r\int\dd\Omega\,
 L_{r\hat k}^\dagger L_{r\hat k}
 \right)+O(\dd\ell^2),
 \label{eq:gksl_K0}\\
 K_{r\hat k}
 &=\sqrt{\dd\ell}\,L_{r\hat k}+O(\dd\ell^{3/2}).
 \label{eq:gksl_K1}
\end{align}
They obey
\begin{equation}
 K_0^\dagger K_0
 +\sum_r\int\dd\Omega\,
 K_{r\hat k}^\dagger K_{r\hat k}
 =\mathbf1+O(\dd\ell^2),
 \label{eq:gksl_kraus_completeness}
\end{equation}
and their Kraus sum reproduces Eq.~\eqref{eq:gksl_appendix_result}.  Two-or-more-photon Kraus sectors in one infinitesimal shell begin at \(O(\dd\ell^2)\) and therefore do not enter the scale-local generator.

Finally,
\begin{equation}
 \operatorname{Re}\left[
 q_{ar}q_{br}^*-\frac12|q_{ar}|^2-\frac12|q_{br}|^2
 \right]
 =-\frac12|q_{ar}-q_{br}|^2.
 \label{eq:gksl_real_identity}
\end{equation}
Hence
\begin{equation}
 \frac{\dd|\rho_{ab}|}{\dd\ell}
 =-\frac{\kappa}{2}\sum_r\int\dd\Omega\,
 |q_{ar}-q_{br}|^2|\rho_{ab}|
 =-\gamma_{ab}(\ell)|\rho_{ab}|,
 \label{eq:gksl_gamma_match}
\end{equation}
with \(\kappa/2=e^2/[4(2\pi)^3]\), exactly matching Eq.~\eqref{eq:gamma_ab}.  For \(a=b\), the dissipator vanishes.  Because all \(L_{r\hat k}\) are diagonal and therefore normal, \(\mathcal L_\ell[\mathbf1]=0\); the scale-local map is unital as well as trace preserving.  Nonnegative shell weights make every instantaneous rate nonnegative, which is the CP-divisibility condition used in Sec.~\ref{sec:resolution_flow}.

\section{Finite CTP One-Loop Examples}
\label{app:finite_ctp}

This appendix reinstates two finite one-loop examples that were displayed in an earlier version of the manuscript, but now derives them directly in the closed-time-path expansion.  The examples are deliberately restricted to a gauge-invariant charged-line subset: (i) the electron-line correction in scattering from a nonradiating external Coulomb source, and (ii) the final-state muon correction in $e^+e^-\to\mu^+\mu^-$.  Vacuum polarization, radiation from the source, initial-state radiation in the second example, and power-suppressed Pauli-form-factor terms are not part of these showcase subsets.  The photon mass $\lambda$ regulates the infrared region, and the unresolved photon energy is restricted to $k^0<\Delta$ in the symmetric frame specified below.

To make the normalization explicit, write each one-loop CTP-sector contribution to the diagonal hard probability as
\begin{equation}
 \frac{p_{f,X}^{(1)}}{p_f^{(0)}}
 \equiv \frac{\alpha}{\pi}\,\mathcal B_X,
 \qquad X=\mathrm V,\mathrm R,\mathrm{V+R},
 \qquad \alpha\equiv\frac{e^2}{4\pi}.
 \label{eq:B_coefficient_definition}
\end{equation}
Thus $\mathcal B_X$ is a dimensionless coefficient with the universal coupling factor $\alpha/\pi$ stripped off.  In the photon-mass scheme used below, $[\mathcal B_X]_{\rm fin}$ denotes the remainder after the explicit $\log(m/\lambda)$ term has been separated.  Only the sum of virtual and real sectors is regulator independent.

\subsection{Same-branch virtual term}

Let $\rho_0$ denote the Born hard density matrix.  A one-loop vertex insertion on the forward branch gives $\delta\rho_{++}=\Gamma_{++}^{(1)}\rho_0$, while the backward-branch insertion gives $\delta\rho_{--}=\rho_0\Gamma_{--}^{(1)}$ with $\Gamma_{--}^{(1)}=\Gamma_{++}^{(1)\dagger}$.  Thus the virtual coefficient in Eq.~\eqref{eq:B_coefficient_definition} is obtained inside the CTP kernel as
\begin{equation}
 \frac{\alpha}{\pi}\,\mathcal B_{\mathrm V}
 =\frac{\Tr\!\left[\Pi_f(\Gamma_{++}^{(1)}\rho_0+\rho_0\Gamma_{--}^{(1)})\right]}
 {\Tr(\Pi_f\rho_0)}
 =2\,\mathrm{Re}\,F_{1,++}^{(1)}+O(m^2/Q^2),
 \label{eq:ctp_virtual_definition}
\end{equation}
where the last form applies to the high-energy Dirac form factor.  In Feynman gauge the renormalized $+$-branch vertex kernel is
\begin{align}
 \Gamma_{++,R}^{\mu(1)}(p',p)
 &=(-ie)^2\!\int\!\frac{\dd^4\ell}{(2\pi)^4}\,
 \gamma^\alpha S_{++}(p'-\ell)\gamma^\mu
 S_{++}(p-\ell)\gamma^\beta
 D^{(\lambda)}_{++,\alpha\beta}(\ell)
 +\delta\Gamma_{++,\mathrm{OS}}^\mu,\label{eq:ctp_vertex_loop}\\
 S_{++}(r)&=\frac{i(\slashed r+m)}{r^2-m^2+i0},\qquad
 D^{(\lambda)}_{++,\alpha\beta}(\ell)
 =\frac{-ig_{\alpha\beta}}{\ell^2-\lambda^2+i0}.
\end{align}
No in--out probability is introduced here: Eq.~\eqref{eq:ctp_vertex_loop} is the loop insertion on the ket branch of the reduced-density-matrix contour, and the $--$ branch supplies its conjugate.  Feynman parameterization of this CTP $++$ kernel followed by the on-shell subtraction at $q^2=0$ gives, for $L\equiv\log(Q^2/m^2)$ and $Q^2\gg m^2$,
\begin{equation}
 \mathcal B_{\mathrm V}^{\mathrm{SL}}
 =-\frac12L^2+\frac32L
 -2(L-1)\log\frac{m}{\lambda}
 -2+\frac{\pi^2}{6}
 +O\!\left(\frac{m^2}{Q^2}\right)
 \label{eq:virtual_spacelike_ctp}
\end{equation}
for spacelike transfer $q^2=-Q^2$.  For timelike transfer $q^2=s=Q^2$, the $+i0$ prescription of the same $++$ kernel gives $L\to L-i\pi$.  Taking the real part required by Eq.~\eqref{eq:ctp_virtual_definition},
\begin{equation}
 \mathcal B_{\mathrm V}^{\mathrm{TL}}
 =-\frac12L^2+\frac32L
 -2(L-1)\log\frac{m}{\lambda}
 -2+\frac{2\pi^2}{3}
 +O\!\left(\frac{m^2}{Q^2}\right).
 \label{eq:virtual_timelike_ctp}
\end{equation}
The difference between Eqs.~\eqref{eq:virtual_spacelike_ctp} and \eqref{eq:virtual_timelike_ctp} is the finite $+\pi^2/2$ generated by the timelike analytic continuation.  The imaginary Coulomb phase in the $++$ kernel cancels against the $--$ branch in the diagonal probability.

\subsection{Cross-branch real term}

The unresolved one-photon contribution is a cross-branch contraction.  Route the positive-energy momentum from the $+$ branch to the $-$ branch:
\begin{equation}
 \frac{\alpha}{\pi}\,\mathcal B_{\mathrm R}
 =-e^2\!\int_{k^0<\Delta}\!\frac{\dd^4k}{(2\pi)^4}\,
 D_{-+}^{(\lambda)}(k)\,J_\mu(k)J^\mu(k),\qquad
 D_{-+}^{(\lambda)}(k)=2\pi\theta(k^0)\delta(k^2-\lambda^2).
 \label{eq:ctp_real_cut}
\end{equation}
The opposite routing is represented by $D_{+-}$ and gives the same physical cut after the loop momentum is reversed; it is not an additional photon state.  Equation~\eqref{eq:ctp_real_cut} is therefore precisely the one-photon phase-space term generated by the cross-branch sector of the density matrix.  The factor in front is consistent with Eq.~\eqref{eq:B_coefficient_definition}: inserting $e^2=4\pi\alpha$ and the on-shell Wightman measure leaves exactly the overall factor $\alpha/\pi$ multiplying the dimensionless angular--energy integral below.

For both examples choose a symmetric frame in which the two charged legs entering the eikonal current are
\begin{equation}
 p_1=E(1,0,0,\beta),\qquad p_2=E(1,0,0,-\beta),\qquad
 J^\mu(k)=\frac{p_1^\mu}{p_1\cdot k}-\frac{p_2^\mu}{p_2\cdot k}.
 \label{eq:symmetric_current}
\end{equation}
For Coulomb scattering this is the Breit frame of the electron line; for pair production it is the $\mu^+\mu^-$ center-of-mass frame.  Write $k^0=\lambda\cosh u$, $|\mathbf k|=\lambda\sinh u$, and $u_\Delta=\operatorname{arcosh}(\Delta/\lambda)$.  Performing the angular integral in Eq.~\eqref{eq:ctp_real_cut} gives the one-dimensional CTP cut coefficient
\begin{align}
 \mathcal B_{\mathrm R}
 &=\int_0^{u_\Delta}\!\dd u\;2\tanh^2u
 \left[
 -\frac{1-\beta^2}{1-\beta^2\tanh^2u}
 +\frac{1+\beta^2}{\beta\tanh u}
 \operatorname{artanh}(\beta\tanh u)
 \right].
 \label{eq:ctp_real_u}
\end{align}
This formula follows only from the Wightman kernel on the crossed branches.  Expanding its exact endpoint value for $Q^2\gg m^2$ gives
\begin{equation}
 \mathcal B_{\mathrm R}
 =2(L-1)\log\frac{2\Delta}{\lambda}
 -\frac12L^2+L-\frac{\pi^2}{3}
 +O\!\left(\frac{m^2}{Q^2}\log\frac{Q^2}{m^2},
             \frac{\lambda^2}{\Delta^2},\frac{\Delta}{Q}\right).
 \label{eq:real_finite_ctp}
\end{equation}

The regulator-dependent and finite pieces are therefore
\begin{align}
 \mathcal B_{\mathrm V}^{\mathrm{SL}}
 &=-2(L-1)\log\frac{m}{\lambda}
   +[\mathcal B_{\mathrm V}^{\mathrm{SL}}]_{\rm fin},&
 [\mathcal B_{\mathrm V}^{\mathrm{SL}}]_{\rm fin}
 &=-\frac12L^2+\frac32L-2+\frac{\pi^2}{6},\label{eq:BvSL_finite}\\
 \mathcal B_{\mathrm V}^{\mathrm{TL}}
 &=-2(L-1)\log\frac{m}{\lambda}
   +[\mathcal B_{\mathrm V}^{\mathrm{TL}}]_{\rm fin},&
 [\mathcal B_{\mathrm V}^{\mathrm{TL}}]_{\rm fin}
 &=-\frac12L^2+\frac32L-2+\frac{2\pi^2}{3},\label{eq:BvTL_finite}\\
 \mathcal B_{\mathrm R}
 &=+2(L-1)\log\frac{m}{\lambda}
   +[\mathcal B_{\mathrm R}]_{\rm fin},&
 [\mathcal B_{\mathrm R}]_{\rm fin}
 &=2(L-1)\log\frac{2\Delta}{m}
   -\frac12L^2+L-\frac{\pi^2}{3}.
 \label{eq:Br_finite}
\end{align}
The $\log(m/\lambda)$ terms cancel exactly.  The finite real coefficient includes the physical resolution dependence through $\Delta$; it is not a regulator-dependent remainder.

\subsection{Two finite inclusive results}

For electron scattering from a nonradiating external Coulomb source, impose the soft-energy cut in the Breit frame and set $Q^2=-t$.  Adding Eqs.~\eqref{eq:BvSL_finite} and \eqref{eq:Br_finite} gives
\begin{equation}
 \begin{aligned}
 \frac{p_{\mathrm{Coul},\mathrm{V+R}}^{(1)}}{p_{\mathrm{Coul}}^{(0)}}
 &=\frac{\alpha}{\pi}\,\mathcal B_{\mathrm{Coul}}^{\mathrm{V+R}},\\
 \mathcal B_{\mathrm{Coul}}^{\mathrm{V+R}}
 &=2(L-1)\log\frac{2\Delta}{Q}
 +\frac32L-2-\frac{\pi^2}{6}
 \end{aligned}
 \qquad
 \left(L=\log\frac{Q^2}{m^2}\right).
 \label{eq:coulomb_finite_ctp}
\end{equation}
This is the correction multiplying the Born electron-line probability in the stated high-energy, soft-cut approximation.  Source radiation, vacuum polarization, and power-suppressed $F_2$ terms are separate contributions.

For $e^+e^-\to\mu^+\mu^-$, retain only the gauge-invariant final-state muon subset and impose $k^0<\Delta$ in the muon-pair center-of-mass frame.  Adding Eqs.~\eqref{eq:BvTL_finite} and \eqref{eq:Br_finite} yields
\begin{equation}
 \begin{aligned}
 \frac{p_{\mathrm{FSR},\mathrm{V+R}}^{(1)}}{p_{\mathrm{FSR}}^{(0)}}
 &=\frac{\alpha}{\pi}\,\mathcal B_{\mathrm{FSR}}^{\mathrm{V+R}},\\
 \mathcal B_{\mathrm{FSR}}^{\mathrm{V+R}}
 &=2(L-1)\log\frac{2\Delta}{\sqrt{s}}
 +\frac32L-2+\frac{\pi^2}{3}
 \end{aligned}
 \qquad
 \left(L=\log\frac{s}{m_\mu^2}\right).
 \label{eq:pair_finite_ctp}
\end{equation}
The result agrees with the high-energy limit of the exact final-state virtual-plus-soft factor~\cite{Czyz:2004}.  Initial-state radiation, initial--final interference, vacuum polarization, and hard resolved photons are not included in Eq.~\eqref{eq:pair_finite_ctp}.

Equations~\eqref{eq:coulomb_finite_ctp} and \eqref{eq:pair_finite_ctp} exhibit the finite content of the doubled contour.  The same-branch terms supply the renormalized virtual coefficient $\mathcal B_{\mathrm V}$; the cross-branch Wightman term supplies the unresolved real coefficient $\mathcal B_{\mathrm R}$; and the auxiliary photon mass cancels only after both CTP sectors are added.  Every displayed probability correction carries the single overall factor $\alpha/\pi$ defined in Eq.~\eqref{eq:B_coefficient_definition}.  The timelike and spacelike answers differ by the finite analytic-continuation term $\pi^2/2$, while their infrared kernels are identical.

\end{document}